\newtheorem{theorem}{Theorem}
\begin{document}

\title{A Marginal Distributionally Robust Kalman Filter for Sensor Fusion}

\author{Weizhi Chen\orcidlink{0000-0003-0066-4053}, Yaowen Li\orcidlink{0000-0002-3050-6390}, Yu Liu\orcidlink{0000-0002-5216-3181}, and You He\orcidlink{0000-0002-6111-340X}
\thanks{This work was supported by National Natural Science Foundation of China under Grant {62388102}, {62425117}, and {62401336}. (\textsl{Corresponding author: Yaowen Li}.)}
\thanks{Weizhi Chen and Yaowen Li are with the Shenzhen International Graduate School, Tsinghua University, Shenzhen 518055, China (e-mail: cwz22@mails.tsinghua.edu.cn; liyw23@sz.tsinghua.edu.cn.}%
\thanks{Yu Liu and You He are with the Department of Electronic Engineering, Tsinghua University, Beijing 100084, China (e-mail: liuyu77360132@126.com; heyou@mail.tsinghua.edu.cn).}
}

\markboth{Journal of \LaTeX\ Class Files, Vol. 14, No. 8, August 2015}
{Shell \MakeLowercase{\textit{et al.}}: Bare Demo of IEEEtran.cls for IEEE Journals}
\maketitle

\begin{abstract}

This paper proposes a moment-constrained marginal distributionally robust Kalman filter (MC-MDRKF) for centralized state estimation in multi-sensor systems with unknown sensor noise correlations. We first derive a robust static estimator and then extend it to dynamic systems for the MC-MDRKF algorithm. The static estimator defines a marginal distributional uncertainty set using moment constraints and formulates a minimax optimization problem to robustly address unknown correlations. We prove that this minimax problem admits an equivalent convex optimization formulation, enabling efficient numerical solutions. The resulting MC-MDRKF algorithm recursively updates state estimates in dynamic state-space models. Simulation results demonstrate the superiority and robustness of the proposed method in a multi-sensor target tracking scenario.

\end{abstract}

\begin{IEEEkeywords}
Centralized fusion, distributionally robust optimization, Kalman filter, robust estimation
\end{IEEEkeywords}

\IEEEpeerreviewmaketitle

\section{Introduction}

\IEEEPARstart{S}{tate} estimation for multi-sensor information fusion is vital for applications including target tracking\cite{8861414}, power systems\cite{8805459}, and control automation\cite{9479686}. Classical Bayesian theory uses the centralized Kalman filter for optimal multi-sensor fusion, assuming known system dynamics and noise statistics\cite{willner1976kalman}. Many variations have been developed to enhance robustness and accuracy in complex environments\cite{liu2017robust, chen2015networked}.

However, the classical Bayesian optimal estimator assumes that measurements from multiple sensors are either independent or have known correlations. The problem of marginal distributional uncertainty arises when these correlations are unknown or difficult to estimate in real time, causing significant challenges to centralized sensor fusion.

A promising approach to handling marginal distributional uncertainty is distributionally robust optimization (DRO), where decisions are optimized against worst-case distributions within a predefined uncertainty set. For instance, Fan et al.\cite{fan2024distributionally} investigate distributionally robust optimization with marginal and copula ambiguity for portfolio optimization, employing the Wasserstein distance. Building on this work, Niu et al.\cite{niu2023marginal} introduce a marginal distributionally robust MMSE estimation for multi-sensor systems using Kullback-Leibler (KL) divergence constraints, demonstrating the effectiveness of marginal uncertainty sets in handling distributional uncertainty in centralized multi-sensor fusion systems. Notably, both works focus solely on static estimation.

However, KL divergence-based methods have been shown to face challenges when extended to dynamic state estimation in state space, even in single-sensor scenarios~\cite{zorzi2016robust, zorzi2019distributed}. Specifically, they may lack fine-grained robustness and can be computationally intensive, limiting their effectiveness in handling dynamic uncertainties~\cite{shafieezadeh2018wasserstein}. In contrast, for Gaussian distributions, moment-constrained methods can be computationally efficient and more effectively handle distributional uncertainties~\cite{Rahimian2019DistributionallyRO}. Nevertheless, existing moment-constrained methods have predominantly been applied to single-sensor scenarios~\cite{wang2021robust, wang2021distributionally}, and in multi-sensor cases, the unknown sensor noise correlation becomes a significant challenge and entails establishing an specific marginal distributional uncertainty set and solving a completely new optimization problem.

In this paper, we propose the moment-constrained marginal distributionally robust Kalman filter (MC-MDRKF) to address marginal distributional uncertainty in multi-sensor systems.

The contributions are as follows: (1) A moment-constrained marginal distributional uncertainty set is devised for multi-sensor fusion to characterize unknown sensor noise correlation; (2) A robust static state estimator is developed by formulating and solving a minimax optimization problem over this uncertainty set, which is further shown to be equivalently reformulated as a convex optimization problem for efficient computation; (3) The MC-MDRKF algorithm is developed for robust centralized state estimation in multi-sensor systems by extending the static estimator to the state space model.


\section{Static Estimation Under Moment-constrained Marginal Distributional Uncertainty}

\subsection{System Model}

The system includes a fusion center and $p$ sensors, each providing an observation vector $\boldsymbol{y}^i \in \mathbb{R}^{m_i} (i=1,\dots,p)$ related to the random state vector $\boldsymbol{x}$. The fusion center uses the combined observations $\boldsymbol{y} = [(\boldsymbol{y}^1)^\top, \dots, (\boldsymbol{y}^p)^\top]^\top \in \mathbb{R}^{m}$, where $m=\Sigma_{i=1}^p m_i$, to estimate $\boldsymbol{x}$ with an estimator $\psi:\mathbb{R}^m\rightarrow \mathbb{R}^n$ giving $\psi(\boldsymbol{y})$. The joint vector $\boldsymbol{z}=[\boldsymbol{x}^{\top},\boldsymbol{y}^{\top}]^{\top}$ in $\mathbb{R}^{n+m}$ has probability density $\mathbb{P}$. The estimator’s mean squared error (MSE) is defined as
\begin{equation}
J(\mathbb{P},\psi)=\mathbb{E}^{\mathbb{P}}\left[\left \| \boldsymbol{x}-\psi(\boldsymbol{y}) \right \|^2_2 \right]
\label{def1}
\end{equation}

The fusion center has access to the nominal marginal distributions $\mathbb{P}_i$ for each sensor  $i = 1, \dots, p$, where each $\mathbb{P}_i$ is assumed to be Gaussian: $\mathbb{P}_i = \mathcal{N}(\boldsymbol{\mu}_{\boldsymbol{x}, \boldsymbol{y}^i}, \boldsymbol{\Sigma}_{\boldsymbol{x}, \boldsymbol{y}^i})$.

\subsection{Moment-constrained Marginal Distributional Uncertainty Set}

Due to uncertainty in the sensor noise correlations, the joint distribution of $\boldsymbol{x}$ and $\boldsymbol{y}$ is not precisely known, and only nominal marginal distributions $\mathbb{P}_i$ are available. To account for this uncertainty, we define a moment-constrained distributional uncertainty set  $\mathcal{P}_{\mathrm{M}}$ , which consists of distributions that satisfy certain moment constraints on the marginals of  $\boldsymbol{x}$  and  $\boldsymbol{y}^i$ . Specifically,
\begin{equation}
\begin{aligned}
    & \mathcal{P}_{\mathrm{M}} := \{\mathbb{Q} \in \mathcal{F} : \forall i=1,\dots,p, \\
    & (\boldsymbol{c}_{\boldsymbol{x},\boldsymbol{y}^i} - \boldsymbol{\mu}_{\boldsymbol{x},\boldsymbol{y}^i})^T \boldsymbol{\Sigma}_{\boldsymbol{x},\boldsymbol{y}^i}^{-1}(\boldsymbol{c}_{\boldsymbol{x},\boldsymbol{y}^i} - \boldsymbol{\mu}_{\boldsymbol{x},\boldsymbol{y}^i}) \leq \gamma_{3,i}, \\
    & \boldsymbol{S}_{\boldsymbol{x},\boldsymbol{y}^i} + (\boldsymbol{c}_{\boldsymbol{x},\boldsymbol{y}^i} - \boldsymbol{\mu}_{\boldsymbol{x},\boldsymbol{y}^i})(\boldsymbol{c}_{\boldsymbol{x},\boldsymbol{y}^i} - \boldsymbol{\mu}_{\boldsymbol{x},\boldsymbol{y}^i})^T \preceq \gamma_{2,i} \boldsymbol{\Sigma}_{\boldsymbol{x},\boldsymbol{y}^i}, \\
    & \boldsymbol{S}_{\boldsymbol{x},\boldsymbol{y}^i} + (\boldsymbol{c}_{\boldsymbol{x},\boldsymbol{y}^i} - \boldsymbol{\mu}_{\boldsymbol{x},\boldsymbol{y}^i})(\boldsymbol{c}_{\boldsymbol{x},\boldsymbol{y}^i} - \boldsymbol{\mu}_{\boldsymbol{x},\boldsymbol{y}^i})^T \succeq \gamma_{1,i} \boldsymbol{\Sigma}_{\boldsymbol{x},\boldsymbol{y}^i}
\}
\end{aligned}
\label{def3}
\end{equation}
where $\mathcal{F}$ includes all distributions of \(\boldsymbol{z}\) with finite second-order moments. $\boldsymbol{c}_{\boldsymbol{x},\boldsymbol{y}^i}$ and $\boldsymbol{S}_{\boldsymbol{x},\boldsymbol{y}^i}$ are the mean and covariance of the potential distributions $\mathbb{Q}_{\boldsymbol{x}, \boldsymbol{y}^i}$, respectively. The non-negative constants $\gamma_{1,i}, \gamma_{2,i}, \gamma_{3,i}$ are adjustable.

\subsection{Minimax Optimization Problem for Robust State Estimation}

To robustly estimate the state $\boldsymbol{x}$ despite marginal distributional uncertainty, we design a robust static state estimator $\psi$ by solving the following minimax optimization problem:
\begin{equation}
    \inf_{\psi \in \mathcal{L}} \sup_{\mathbb{Q} \in \mathcal{P}_{\mathrm{M}}} \mathbb{E}^{\mathbb{Q}}\left[\|\boldsymbol{x} - \psi(\boldsymbol{y})\|^2_2\right]
\label{original minimax}
\end{equation}
where $ \mathcal{L}$  is the set of all measurable functions mapping from  $\mathbb{R}^m $ to  $\mathbb{R}^n$ . The problem seeks an estimator  $\psi$  that minimizes the MSE under the least favorable distribution in  $\mathcal{P}_{\mathrm{M}}$.

\subsection{Convex Reformulation and Solution}

To solve \eqref{original minimax}, the upper and lower bounds of the problem are established respectively, and proved equivalent and solvable. Specifically, the upper bound can be reformulated as a convex problem, solvable as demonstrated in Theorem \ref{theorem1}. Theorem \ref{theorem2} establishes the lower bound and its equivalence to the upper bound, ensuring that solving the convex program effectively addresses the minimax problem. Theorem \ref{theorem3} further verifies that such solution is a saddle point of the original problem.

\begin{theorem}
\label{theorem1}
The upper bound of problem \eqref{original minimax} is derived by restricting $\mathcal{L}$ to affine estimators $\mathcal{A}$ as
\begin{equation}
\inf_{\psi \in \mathcal{A}} \sup_{\mathbb{Q} \in \mathcal{P}_\mathrm{M}} J(\mathbb{Q}, \psi)
\label{upper approx}
\end{equation}
This can be equivalently solved by
\begin{equation}
    \sup_{\boldsymbol{S} \in \mathcal{P}_\mathrm{M}'} \operatorname{Tr} \left( \boldsymbol{S}_{\boldsymbol{x}\boldsymbol{x}} - \boldsymbol{S}_{\boldsymbol{x}\boldsymbol{y}} \boldsymbol{S}_{\boldsymbol{y}\boldsymbol{y}}^{-1} \boldsymbol{S}_{\boldsymbol{y}\boldsymbol{x}} \right)
    \label{maxmin-II}
\end{equation}
where $\boldsymbol{S}$ meets the criteria of the moment-constrained set $\mathcal{P}_\mathrm{M}'$.
\end{theorem}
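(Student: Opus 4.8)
The plan is to prove the two assertions separately: that the affine restriction furnishes a genuine upper bound, and that the resulting affine minimax problem is equal to the covariance program \eqref{maxmin-II}. The first is immediate — because $\mathcal{A} \subseteq \mathcal{L}$, the infimum over the smaller class $\mathcal{A}$ can only be larger, so \eqref{upper approx} upper-bounds \eqref{original minimax}.

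For the equivalence I would begin with the observation that, for an affine estimator $\psi(\boldsymbol{y}) = \boldsymbol{A}\boldsymbol{y} + \boldsymbol{b}$, the cost depends on $\mathbb{Q}$ only through its first two moments. Writing $\boldsymbol{L} = [\boldsymbol{I}, -\boldsymbol{A}, -\boldsymbol{b}]$ and $\boldsymbol{w} = [\boldsymbol{x}^\top, \boldsymbol{y}^\top, 1]^\top$, one has $J(\mathbb{Q}, \psi) = \operatorname{Tr}(\boldsymbol{L}\,\mathbb{E}^{\mathbb{Q}}[\boldsymbol{w}\boldsymbol{w}^\top]\,\boldsymbol{L}^\top)$, which is linear in the augmented second-moment matrix and equals $\operatorname{Tr}(\boldsymbol{S}_{\boldsymbol{x}\boldsymbol{x}} - \boldsymbol{A}\boldsymbol{S}_{\boldsymbol{y}\boldsymbol{x}} - \boldsymbol{S}_{\boldsymbol{x}\boldsymbol{y}}\boldsymbol{A}^\top + \boldsymbol{A}\boldsymbol{S}_{\boldsymbol{y}\boldsymbol{y}}\boldsymbol{A}^\top) + \|\boldsymbol{c}_{\boldsymbol{x}} - \boldsymbol{A}\boldsymbol{c}_{\boldsymbol{y}} - \boldsymbol{b}\|_2^2$. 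Hence the inner supremum collapses to one over the moment set $\mathcal{P}_\mathrm{M}'$ consisting of the $(\boldsymbol{c}, \boldsymbol{S})$ realizable in $\mathcal{P}_\mathrm{M}$; I would make this set explicit by recasting the rank-one moment constraints of \eqref{def3} as linear matrix inequalities through Schur complements, which exposes $\mathcal{P}_\mathrm{M}'$ as convex and compact.

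I would then perform the minimax interchange. The decisive structural facts are that $J$ is jointly convex in the estimator coefficients $(\boldsymbol{A}, \boldsymbol{b})$ while being linear (thus concave) in the second moments; combined with convexity and compactness of the moment set, Sion's minimax theorem permits exchanging $\inf_{\psi \in \mathcal{A}}$ with $\sup_{(\boldsymbol{c},\boldsymbol{S}) \in \mathcal{P}_\mathrm{M}'}$. For fixed moments the inner minimization is the classical linear MMSE problem: $\boldsymbol{b}^\star = \boldsymbol{c}_{\boldsymbol{x}} - \boldsymbol{A}\boldsymbol{c}_{\boldsymbol{y}}$ annihilates the mean term, after which minimizing the residual quadratic in $\boldsymbol{A}$ yields $\boldsymbol{A}^\star = \boldsymbol{S}_{\boldsymbol{x}\boldsymbol{y}}\boldsymbol{S}_{\boldsymbol{y}\boldsymbol{y}}^{-1}$ and optimal value equal to the Schur-complement trace $\operatorname{Tr}(\boldsymbol{S}_{\boldsymbol{x}\boldsymbol{x}} - \boldsymbol{S}_{\boldsymbol{x}\boldsymbol{y}}\boldsymbol{S}_{\boldsymbol{y}\boldsymbol{y}}^{-1}\boldsymbol{S}_{\boldsymbol{y}\boldsymbol{x}})$. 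Because this value is independent of $\boldsymbol{c}$, the outer supremum retains only the covariance constraints and reduces exactly to \eqref{maxmin-II}.

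The step I expect to be hardest is justifying the minimax interchange rigorously. The subtlety is that $J$ is convex — not concave — in the mean $\boldsymbol{c}$ taken alone, so the concavity required on the maximization side is recovered only after recognizing that $J$ is genuinely linear in the full augmented second-moment matrix; phrasing the uncertainty set in those coordinates is what makes Sion's hypotheses hold. Establishing that $\mathcal{P}_\mathrm{M}'$ is convex and compact is the other delicate point: the upper-bound and mean constraints linearize cleanly, but the lower-bound constraint $\succeq \gamma_{1,i}\boldsymbol{\Sigma}$ must be treated with care since its convexity is not manifest in the $(\boldsymbol{c}, \boldsymbol{S})$ coordinates. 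A minor additional check is that $\boldsymbol{S}_{\boldsymbol{y}\boldsymbol{y}}$ stays invertible throughout $\mathcal{P}_\mathrm{M}'$ so that the Schur complement is well defined.
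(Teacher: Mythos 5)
Your proposal is correct and follows essentially the same route as the paper: restrict to affine estimators, reduce the inner supremum to the first and second moments of $\mathbb{Q}$, invoke a minimax theorem to exchange $\inf$ and $\sup$, and solve the inner problem in closed form ($\boldsymbol{b}^\star$ annihilating the mean term, $\boldsymbol{A}^\star = \boldsymbol{S}_{\boldsymbol{x}\boldsymbol{y}}\boldsymbol{S}_{\boldsymbol{y}\boldsymbol{y}}^{-1}$) to land on the Schur-complement trace. If anything your write-up is more careful than the paper's: your point that concavity on the maximization side holds only in the augmented second-moment coordinates, and your check that the lower-bound constraint $\succeq \gamma_{1,i}\boldsymbol{\Sigma}_{\boldsymbol{x},\boldsymbol{y}^i}$ linearizes in those coordinates, address gaps the paper passes over silently when it asserts convexity and compactness of $\mathcal{P}_\mathrm{M}'$ and applies von Neumann's theorem.
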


\begin{proof}
By restricting the set of estimators $\mathcal{L}$ to the set of affine estimators $\mathcal{A}$, defined by
\begin{equation}
    \mathcal{A} := \left\{ \psi \in \mathcal{L} \mid \exists \boldsymbol{A} \in \mathbb{R}^{n \times m}, \boldsymbol{b} \in \mathbb{R}^n, \text{s.t.} \psi(\boldsymbol{y}) = \boldsymbol{A}\boldsymbol{y} + \boldsymbol{b} \right\}
    \label{def6}
\end{equation}
an upper bound of the minimax problem \eqref{original minimax} can be formulated as in \eqref{upper approx}.

Inspired by the robust estimation approach outlined in \cite{wang2021robust}, the objective function of \eqref{upper approx} can be reformulated as follows:
\begin{equation}
\begin{aligned}
\inf _{\boldsymbol{A}, \boldsymbol{b}} \sup _{\boldsymbol{c},\boldsymbol{S}} & \left\langle\boldsymbol{I}, \boldsymbol{S}_{\boldsymbol{x} \boldsymbol{x}}+\boldsymbol{c}_{\boldsymbol{x}} \boldsymbol{c}_{\boldsymbol{x}}^{\top}\right\rangle+\left\langle\boldsymbol{A}^{\top} \boldsymbol{A}, \boldsymbol{S}_{\boldsymbol{y} \boldsymbol{y}}+\boldsymbol{c}_{\boldsymbol{y}} \boldsymbol{c}_{\boldsymbol{y}}^{\top}\right\rangle \\
& -\left\langle\boldsymbol{A}, \boldsymbol{S}_{\boldsymbol{x} \boldsymbol{y}}+\boldsymbol{c}_{\boldsymbol{x}} \boldsymbol{c}_{\boldsymbol{y}}^{\top}\right\rangle-\left\langle\boldsymbol{A}^{\top}, \boldsymbol{S}_{\boldsymbol{y} \boldsymbol{x}}+\boldsymbol{c}_{\boldsymbol{y}} \boldsymbol{c}_{\boldsymbol{x}}^{\top}\right\rangle \\
& +2\left\langle\boldsymbol{b}, \boldsymbol{A} \boldsymbol{c}_{\boldsymbol{y}}-\boldsymbol{c}_{\boldsymbol{x}}\right\rangle+\left\langle\boldsymbol{b}, \boldsymbol{b}\right\rangle
\end{aligned}
\label{upper approx-finite}
\end{equation}
where $\boldsymbol{c} = \mathbb{E}^{\mathbb{Q}}(\boldsymbol{z}) \in \mathbb{R}^{n+m}$ and $\boldsymbol{S} = \mathbb{E}^{\mathbb{Q}}(\boldsymbol{z}\boldsymbol{z}^\top) - \boldsymbol{c}\boldsymbol{c}^\top \in \mathbb{S}^{n+m}_+$, with $(\boldsymbol{c}, \boldsymbol{S})$ satisfying the constraints of the marginal uncertainty set:
\begin{equation}
\begin{aligned}
	& \mathcal{P}_\mathrm{M}':=\{ \boldsymbol{c}\in \mathbb{R}^{n+m}, \boldsymbol{S}\in \mathbb{S}^{n+m}: \forall i=1,\dots,p,  \\
	& \left(\boldsymbol{c}_{\boldsymbol{x},\boldsymbol{y}^i}-\boldsymbol{\mu}_{\boldsymbol{x},\boldsymbol{y}^i}\right)^{T} \boldsymbol{\Sigma}_{\boldsymbol{x},\boldsymbol{y}^i}^{-1}\left(\boldsymbol{c}_{\boldsymbol{x},\boldsymbol{y}^i}-\boldsymbol{\mu}_{\boldsymbol{x},\boldsymbol{y}^i}\right) \leq \gamma_{3,i},\\
	& \boldsymbol{S}_{\boldsymbol{x},\boldsymbol{y}^i}+\left(\boldsymbol{c}_{\boldsymbol{x},\boldsymbol{y}^i}-\boldsymbol{\mu}_{\boldsymbol{x},\boldsymbol{y}^i}\right)\left(\boldsymbol{c}_{\boldsymbol{x},\boldsymbol{y}^i}-\boldsymbol{\mu}_{\boldsymbol{x},\boldsymbol{y}^i}\right)^{T} \preceq \gamma_{2,i} \boldsymbol{\Sigma}_{\boldsymbol{x},\boldsymbol{y}^i}, \\
	& \boldsymbol{S}_{\boldsymbol{x},\boldsymbol{y}^i}+\left(\boldsymbol{c}_{\boldsymbol{x},\boldsymbol{y}^i}-\boldsymbol{\mu}_{\boldsymbol{x},\boldsymbol{y}^i}\right)\left(\boldsymbol{c}_{\boldsymbol{x},\boldsymbol{y}^i}-\boldsymbol{\mu}_{\boldsymbol{x},\boldsymbol{y}^i}\right)^{T} \succeq \gamma_{1,i} \boldsymbol{\Sigma}_{\boldsymbol{x},\boldsymbol{y}^i}\}
\end{aligned}
\label{def8}
\end{equation}
and $\langle \boldsymbol{A}^{\top}, \boldsymbol{B} \rangle := \operatorname{Tr}[\boldsymbol{A}^{\top} \boldsymbol{B}]$ denotes the trace inner product of two matrices $\boldsymbol{A}$ and $\boldsymbol{B}$.

Note \eqref{upper approx-finite} is constraint-free, quadratic and convex in terms of $\boldsymbol{b}$. Therefore, the optimal solution for $\boldsymbol{b}$ can be obtained by the first-order optimality condition:
\begin{equation}
	\boldsymbol{b}^\star = \boldsymbol{\mu}_{\boldsymbol{x}} - \boldsymbol{A}\boldsymbol{\mu}_{\boldsymbol{y}}
\label{sovb}
\end{equation}

This equality simplifies \eqref{upper approx-finite} to
\begin{equation}
	\inf _{\boldsymbol{A}} \sup _{\boldsymbol{S}}\left\langle\boldsymbol{I}, \boldsymbol{S}_{\boldsymbol{x} \boldsymbol{x}}\right\rangle+\left\langle\boldsymbol{A}^{\top} \boldsymbol{A}, \boldsymbol{S}_{\boldsymbol{y} \boldsymbol{y}}\right\rangle-\left\langle\boldsymbol{A}, \boldsymbol{S}_{\boldsymbol{x} \boldsymbol{y}}\right\rangle-\left\langle\boldsymbol{A}^{\top}, \boldsymbol{S}_{\boldsymbol{y} \boldsymbol{x}}\right\rangle
\label{obj1}
\end{equation}
which can be further written in a compact form as
\begin{equation}
	\inf _{\boldsymbol{A}} \sup _{\boldsymbol{S}}\left\langle\left[\begin{array}{cc}
\boldsymbol{I} & -\boldsymbol{A} \\
-\boldsymbol{A}^{\top} & \boldsymbol{A}^{\top} \boldsymbol{A}
\end{array}\right], \boldsymbol{S}\right\rangle
\label{obj2}
\end{equation}
which is subject to \eqref{def8}.

Note that the objective function \eqref{obj2} is independent of $\boldsymbol{c}$. Therefore, to maximize \eqref{obj2}, it is advantageous to have a larger feasible set for $\boldsymbol{S}$. This leads to the optimal solution for $\boldsymbol{c}$ being
\begin{equation}
	\boldsymbol{c}^\star = \boldsymbol{\mu}
\end{equation}

Since the uncertainty set \eqref{def8} is convex and compact in terms of $\boldsymbol{S}_k$ and the objective function in \eqref{obj2} is affine in $\boldsymbol{S}$ and positive-definite quadratic in $\boldsymbol{A}$, von Neumann’s min-max theorem\cite{fan1953minimax} holds, i.e.,
\begin{equation}
   	\inf _{\boldsymbol{A}} \sup _{\boldsymbol{S}}\left\langle\left[\begin{array}{cc}
\boldsymbol{I} & -\boldsymbol{A} \\
-\boldsymbol{A}^{\top} & \boldsymbol{A}^{\top} \boldsymbol{A}
\end{array}\right], \boldsymbol{S}\right\rangle
   = \sup _{\boldsymbol{S}} 	\inf _{\boldsymbol{}A}\left\langle\left[\begin{array}{cc}
\boldsymbol{I} & -\boldsymbol{A} \\
-\boldsymbol{A}^{\top} & \boldsymbol{A}^{\top} \boldsymbol{A}
\end{array}\right], \boldsymbol{S}\right\rangle
\label{eq3}
\end{equation}

Given that problem \eqref{obj2} for $\boldsymbol{A}$ is unconstrained, differentiable, and convex, the first-order optimality condition, i.e.,
\begin{equation}
	\boldsymbol{A}\boldsymbol{S}_{\boldsymbol{y}\boldsymbol{y}}-\boldsymbol{S}_{\boldsymbol{x}\boldsymbol{y}}=0
\end{equation}
gives the optimal solution of $\boldsymbol{A}$ as 
\begin{equation}
	\boldsymbol{A}^\star = \boldsymbol{S}_{\boldsymbol{x}\boldsymbol{y}}\boldsymbol{S}_{\boldsymbol{y}\boldsymbol{y}}^{-1}
\label{sova}
\end{equation}

With \eqref{sovb} and \eqref{sova}, \eqref{obj2} can be simplified to
\begin{equation}
	\sup_{\boldsymbol{S}\in \mathcal{P}_\mathrm{M}'} \operatorname{Tr}\left( \boldsymbol{S}_{\boldsymbol{x}\boldsymbol{x}}-\boldsymbol{S}_{\boldsymbol{x}\boldsymbol{y}} \boldsymbol{S}_{\boldsymbol{y}\boldsymbol{y}}^{-1}\boldsymbol{S}_{\boldsymbol{y}\boldsymbol{x}} \right)
\label{maxmin-II}
\end{equation}

This yields a convex semi-definite program, which can be solved numerically using semidefinite programming (SDP) solvers like SeDuMi via the CVX interface \cite{grant2014cvx}.

\end{proof}

\begin{theorem}
\label{theorem2}
The lower bound of the minimax problem \eqref{original minimax} can be established by reversing the order of the minimization and maximization operations, as follows:
\begin{equation}
    \sup_{\mathbb{Q} \in \mathcal{P}_\mathrm{M}} \inf_{\psi \in \mathcal{L}} J(\mathbb{Q}, \psi) = \sup_{\mathbb{Q} \in \mathcal{P}_\mathrm{M}} \inf_{\psi \in \mathcal{L}} \mathbb{E}^{\mathbb{Q}} \left[ \left\| \boldsymbol{x} - \psi(\boldsymbol{y}) \right\|^2 \right]
\end{equation}
Moreover, the equivalence of the upper and lower bounds is demonstrated as follows:
\begin{equation}
    \sup_{\mathbb{Q} \in \mathcal{P}_\mathrm{M}} \inf_{\psi \in \mathcal{L}} J(\mathbb{Q}, \psi) = \inf_{\psi \in \mathcal{L}} \sup_{\mathbb{Q} \in \mathcal{P}_\mathrm{M}} J(\mathbb{Q}, \psi) = \inf_{\psi \in \mathcal{A}} \sup_{\mathbb{Q} \in \mathcal{P}_\mathrm{M}} J(\mathbb{Q}, \psi).
\label{eq18}
\end{equation}
\end{theorem}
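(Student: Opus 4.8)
The plan is to sandwich the common value between the elementary weak-duality chain and a matching lower bound realized by a Gaussian worst-case distribution. First I would record the two inequalities that always hold: the minimax inequality gives $\sup_{\mathbb{Q} \in \mathcal{P}_\mathrm{M}} \inf_{\psi \in \mathcal{L}} J(\mathbb{Q}, \psi) \leq \inf_{\psi \in \mathcal{L}} \sup_{\mathbb{Q} \in \mathcal{P}_\mathrm{M}} J(\mathbb{Q}, \psi)$, and since $\mathcal{A} \subseteq \mathcal{L}$ shrinking the feasible set in the inner infimum can only raise it, so $\inf_{\psi \in \mathcal{L}} \sup_{\mathbb{Q}} J \leq \inf_{\psi \in \mathcal{A}} \sup_{\mathbb{Q}} J$. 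These two give a chain of inequalities running from the sup-inf over $\mathcal{L}$, through the inf-sup over $\mathcal{L}$, to the inf-sup over $\mathcal{A}$, so it suffices to close the loop by proving the single reverse inequality $\sup_{\mathbb{Q} \in \mathcal{P}_\mathrm{M}} \inf_{\psi \in \mathcal{L}} J(\mathbb{Q}, \psi) \geq \inf_{\psi \in \mathcal{A}} \sup_{\mathbb{Q} \in \mathcal{P}_\mathrm{M}} J(\mathbb{Q}, \psi)$.

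The key step is a worst-case-Gaussian argument. For any fixed $\mathbb{Q}$ the inner infimum over all measurable $\psi$ is attained by the conditional mean, so $\inf_{\psi \in \mathcal{L}} J(\mathbb{Q}, \psi) = \mathbb{E}^{\mathbb{Q}}[\|\boldsymbol{x} - \mathbb{E}^{\mathbb{Q}}[\boldsymbol{x} \mid \boldsymbol{y}]\|^2]$, which is bounded above by the best affine MSE; the latter depends on $\mathbb{Q}$ only through its first two moments and equals $\operatorname{Tr}(\boldsymbol{S}_{\boldsymbol{x}\boldsymbol{x}} - \boldsymbol{S}_{\boldsymbol{x}\boldsymbol{y}}\boldsymbol{S}_{\boldsymbol{y}\boldsymbol{y}}^{-1}\boldsymbol{S}_{\boldsymbol{y}\boldsymbol{x}})$. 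Crucially, when $\mathbb{Q}$ is Gaussian the conditional mean is affine, so this upper bound is met with equality. I would then take $\boldsymbol{S}^\star$ to be a maximizer of the convex program in Theorem \ref{theorem1} --- it exists because $\mathcal{P}_\mathrm{M}'$ is compact --- together with $\boldsymbol{c}^\star = \boldsymbol{\mu}$, and form the Gaussian $\mathbb{Q}^\star = \mathcal{N}(\boldsymbol{\mu}, \boldsymbol{S}^\star)$. Since membership in $\mathcal{P}_\mathrm{M}$ as defined in \eqref{def3} is dictated purely by the marginal moment constraints, this $\mathbb{Q}^\star$ lies in $\mathcal{P}_\mathrm{M}$.

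With this Gaussian in hand, I would finish via
\begin{equation*}
\sup_{\mathbb{Q} \in \mathcal{P}_\mathrm{M}} \inf_{\psi \in \mathcal{L}} J(\mathbb{Q}, \psi) \geq \inf_{\psi \in \mathcal{L}} J(\mathbb{Q}^\star, \psi) = \operatorname{Tr}\left( \boldsymbol{S}^\star_{\boldsymbol{x}\boldsymbol{x}} - \boldsymbol{S}^\star_{\boldsymbol{x}\boldsymbol{y}}(\boldsymbol{S}^\star_{\boldsymbol{y}\boldsymbol{y}})^{-1}\boldsymbol{S}^\star_{\boldsymbol{y}\boldsymbol{x}} \right),
\end{equation*}
where the equality uses that the conditional mean of the Gaussian $\mathbb{Q}^\star$ is affine. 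By Theorem \ref{theorem1} the right-hand side equals $\inf_{\psi \in \mathcal{A}} \sup_{\mathbb{Q} \in \mathcal{P}_\mathrm{M}} J(\mathbb{Q}, \psi)$, which delivers the required reverse inequality and collapses the whole chain to equalities. The main obstacle I anticipate is the worst-case-Gaussian lemma itself --- specifically, verifying cleanly that the best affine MSE dominates the true MMSE for every admissible distribution while being attained exactly by the Gaussian, and confirming that the optimal moment pair $(\boldsymbol{c}^\star, \boldsymbol{S}^\star)$ genuinely induces a feasible Gaussian in $\mathcal{P}_\mathrm{M}$ (that $\boldsymbol{S}^\star \succeq 0$ and that the per-sensor constraints transfer faithfully from $\mathcal{P}_\mathrm{M}'$ back to $\mathcal{P}_\mathrm{M}$).
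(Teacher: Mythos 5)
Your proposal is correct and follows essentially the same route as the paper: the weak-duality chain $\sup\inf \leq \inf_{\mathcal{L}}\sup \leq \inf_{\mathcal{A}}\sup$, closed by exhibiting the Gaussian $\mathcal{N}(\boldsymbol{\mu},\boldsymbol{S}^\star)$ built from the maximizer of \eqref{maxmin-II}, whose MMSE is attained by an affine estimator and equals $\operatorname{Tr}\bigl(\boldsymbol{S}^\star_{\boldsymbol{x}\boldsymbol{x}}-\boldsymbol{S}^\star_{\boldsymbol{x}\boldsymbol{y}}(\boldsymbol{S}^\star_{\boldsymbol{y}\boldsymbol{y}})^{-1}\boldsymbol{S}^\star_{\boldsymbol{y}\boldsymbol{x}}\bigr)$. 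You are in fact more careful than the paper in flagging the feasibility of $\mathbb{Q}^\star$ in $\mathcal{P}_\mathrm{M}$ and the existence of the maximizer, which the paper simply assumes.
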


\begin{proof}

First, the following inequality is established:
\begin{equation}
	\sup_{\mathbb{Q}\in \mathcal{P}_\mathrm{M}} \inf_{\psi\in \mathcal{L}}  J(\mathbb{Q},\psi)\leq
		\inf_{\psi\in \mathcal{L}}  \sup_{\mathbb{Q}\in \mathcal{P}_\mathrm{M}} J(\mathbb{Q},\psi)\leq
		\inf_{\psi\in \mathcal{A}}  \sup_{\mathbb{Q}\in \mathcal{P}_\mathrm{M}} J(\mathbb{Q},\psi)
\label{eq19}
\end{equation}
where the first equality is due to the weak duality theorem\cite{boyd2004convex} and the second equality exploits the inclusion $\mathcal{A}\subseteq \mathcal{L}$.

Assuming $\boldsymbol{S}^\star$ is a solution of \eqref{maxmin-II}, it follows that
\begin{equation}
	\inf_{\psi\in \mathcal{A}}  \sup_{\mathbb{Q}\in \mathcal{P}_\mathrm{M}} J(\mathbb{Q},\psi) = 
	\operatorname{Tr}\left( \boldsymbol{S}_{\boldsymbol{x}\boldsymbol{x}}^{\star}-\boldsymbol{S}_{\boldsymbol{x}\boldsymbol{y}}^{\star} (\boldsymbol{S}_{\boldsymbol{y}\boldsymbol{y}}^{\star})^{-1}\boldsymbol{S}_{\boldsymbol{y}\boldsymbol{x}}^{\star}\right)
\label{eq20}
\end{equation}

Following \eqref{eq19} and \eqref{eq20}, we have
\begin{equation}
	\sup_{\mathbb{Q}\in \mathcal{P}_\mathrm{M}} \inf_{\psi\in \mathcal{L}}  J(\mathbb{Q},\psi)\leq
	\operatorname{Tr}\left( \boldsymbol{S}_{\boldsymbol{x}\boldsymbol{x}}^{\star}-\boldsymbol{S}_{\boldsymbol{x}\boldsymbol{y}}^{\star} (\boldsymbol{S}_{\boldsymbol{y}\boldsymbol{y}}^{\star})^{-1}\boldsymbol{S}_{\boldsymbol{y}\boldsymbol{x}}^{\star}\right)
\label{eq21}
\end{equation}

By the Bayesian estimation theory, for a Gaussian density
\begin{equation}
	\mathbb{Q}^{\star}_{\mathcal{N}}=\mathcal{N}(\boldsymbol{\mu},\boldsymbol{S}^\star) \in \mathcal{P}_\mathrm{M}
\label{eq17}
\end{equation}
the optimal value of the outer minimization problem of \eqref{original minimax} can be obtained as
\begin{equation}
	\inf_{\psi\in \mathcal{L}}  J(\mathbb{Q}^{\star}_{\mathcal{N}},\psi) = \operatorname{Tr}\left( \boldsymbol{S}_{\boldsymbol{x}\boldsymbol{x}}^{\star}-\boldsymbol{S}_{\boldsymbol{x}\boldsymbol{y}}^{\star} (\boldsymbol{S}_{\boldsymbol{y}\boldsymbol{y}}^{\star})^{-1}\boldsymbol{S}_{\boldsymbol{y}\boldsymbol{x}}^{\star}\right)
\label{eq22}
\end{equation}

Combining \eqref{eq22} with \eqref{eq19} and \eqref{eq20}, the proof is completed.

\end{proof}

Theorem \ref{theorem2} suggests solving the original problem \eqref{original minimax} is equivalent to solving either the upper or the lower bound problem. Furthermore, Theorem \ref{theorem3} established that the optimal solution pair $(\mathbb{Q}^\star_{\mathcal{N}}, \psi^\star)$ forms a saddle point for \eqref{original minimax}.

\begin{theorem}
\label{theorem3}
	Let $\mathcal{L}$ be the family of all measurable function from $\mathbb{R}^m$ to $\mathbb{R}^n$ and $\mathcal{P}_\mathrm{M}$ given by \eqref{def3} and $\psi^\star:\mathbb{R}^m\rightarrow \mathbb{R}^n$ be an affine function defined as 
	\begin{equation}
		\psi^\star(\boldsymbol{y})=\boldsymbol{A}^\star \boldsymbol{y} + \boldsymbol{b}^\star, \forall \boldsymbol{y} \in \mathbb{R}^m
	\label{eq23}
	\end{equation}
	where $(\boldsymbol{A},\boldsymbol{b})\in \mathbb{R}^m\times \mathbb{R}^n$ is given by \eqref{sova} and \eqref{sovb}. Then $(\mathbb{Q}^\star_{\mathcal{N}},\psi^\star)\in \mathcal{P}_\mathrm{M}\times \mathcal{L}$ is the saddle point solution of \eqref{original minimax}, i.e., $J(\mathbb{Q},\psi^\star)\leq J(\mathbb{Q}^\star_{\mathcal{N}},\psi^\star)\leq J(\mathbb{Q}^\star_{\mathcal{N}},\psi),\forall (\mathbb{Q},\psi)\in \mathcal{P}_{\mathrm{M}}\times \mathcal{L}$, where $J$ and $\mathbb{Q}^\star_{\mathcal{N}}$ are defined by  \eqref{original minimax} and \eqref{eq17}, respectively.
\end{theorem}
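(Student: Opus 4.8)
The plan is to verify the two saddle inequalities separately, reading the right one off the Bayesian optimality already invoked in Theorem~\ref{theorem2} and concentrating the real effort on the left one. For the right inequality $J(\mathbb{Q}^\star_\mathcal{N},\psi^\star)\le J(\mathbb{Q}^\star_\mathcal{N},\psi)$ for all $\psi\in\mathcal{L}$, I would use that $\mathbb{Q}^\star_\mathcal{N}=\mathcal{N}(\boldsymbol\mu,\boldsymbol S^\star)$ is Gaussian, so the MSE-minimizing estimator over all measurable maps is the conditional mean $\mathbb{E}^{\mathbb{Q}^\star_\mathcal{N}}[\boldsymbol x\mid\boldsymbol y]$, which for a jointly Gaussian pair is the affine map $\boldsymbol y\mapsto\boldsymbol\mu_{\boldsymbol x}+\boldsymbol S^\star_{\boldsymbol x\boldsymbol y}(\boldsymbol S^\star_{\boldsymbol y\boldsymbol y})^{-1}(\boldsymbol y-\boldsymbol\mu_{\boldsymbol y})$. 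By \eqref{sova}--\eqref{sovb} this is exactly $\psi^\star$, so $\psi^\star$ attains $\inf_{\psi\in\mathcal{L}}J(\mathbb{Q}^\star_\mathcal{N},\psi)$; this is precisely \eqref{eq22}, and the right inequality is immediate.

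The left inequality $J(\mathbb{Q},\psi^\star)\le J(\mathbb{Q}^\star_\mathcal{N},\psi^\star)$ for all $\mathbb{Q}\in\mathcal{P}_\mathrm{M}$ is where the work lies. Since $\psi^\star$ is fixed and affine, $J(\mathbb{Q},\psi^\star)=\mathbb{E}^{\mathbb{Q}}[\|\boldsymbol x-\boldsymbol A^\star\boldsymbol y-\boldsymbol b^\star\|_2^2]$ is quadratic in $\boldsymbol z$ and hence depends on $\mathbb{Q}$ only through its mean $\boldsymbol c$ and covariance $\boldsymbol S$. Substituting $\boldsymbol b^\star=\boldsymbol\mu_{\boldsymbol x}-\boldsymbol A^\star\boldsymbol\mu_{\boldsymbol y}$ into the expansion \eqref{upper approx-finite} and writing $\boldsymbol G:=[\boldsymbol I,\ -\boldsymbol A^\star]$, so that the matrix in \eqref{obj2} evaluated at $\boldsymbol A=\boldsymbol A^\star$ equals $\boldsymbol G^\top\boldsymbol G$, I would complete the square to combine the covariance and mean-deviation pieces into a single trace,
\[
J(\mathbb{Q},\psi^\star)=\langle\boldsymbol G^\top\boldsymbol G,\ \boldsymbol S\rangle+\|\boldsymbol G(\boldsymbol c-\boldsymbol\mu)\|_2^2=\langle\boldsymbol G^\top\boldsymbol G,\ \tilde{\boldsymbol S}\rangle,\qquad \tilde{\boldsymbol S}:=\boldsymbol S+(\boldsymbol c-\boldsymbol\mu)(\boldsymbol c-\boldsymbol\mu)^\top,
\]
where $\tilde{\boldsymbol S}$ is the second moment of $\boldsymbol z$ about the nominal mean $\boldsymbol\mu$.

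The crux is then to bound this trace over $\mathbb{Q}\in\mathcal{P}_\mathrm{M}$. I would observe that, because the $(\boldsymbol x,\boldsymbol y^i)$ block of $(\boldsymbol c-\boldsymbol\mu)(\boldsymbol c-\boldsymbol\mu)^\top$ is exactly $(\boldsymbol c_{\boldsymbol x,\boldsymbol y^i}-\boldsymbol\mu_{\boldsymbol x,\boldsymbol y^i})(\boldsymbol c_{\boldsymbol x,\boldsymbol y^i}-\boldsymbol\mu_{\boldsymbol x,\boldsymbol y^i})^\top$, the two semidefinite constraints of $\mathcal{P}_\mathrm{M}$ read simply $\gamma_{1,i}\boldsymbol\Sigma_{\boldsymbol x,\boldsymbol y^i}\preceq\tilde{\boldsymbol S}_{\boldsymbol x,\boldsymbol y^i}\preceq\gamma_{2,i}\boldsymbol\Sigma_{\boldsymbol x,\boldsymbol y^i}$ on the marginal blocks of $\tilde{\boldsymbol S}$, while $\tilde{\boldsymbol S}=\boldsymbol S+(\boldsymbol c-\boldsymbol\mu)(\boldsymbol c-\boldsymbol\mu)^\top\succeq\boldsymbol 0$. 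Every feasible $\tilde{\boldsymbol S}$ therefore satisfies the very constraints that define $\mathcal{P}_\mathrm{M}'$ at $\boldsymbol c=\boldsymbol\mu$, so
\[
\sup_{\mathbb{Q}\in\mathcal{P}_\mathrm{M}}J(\mathbb{Q},\psi^\star)\ \le\ \sup_{\boldsymbol S}\ \Big\langle\begin{bmatrix}\boldsymbol I & -\boldsymbol A^\star\\ -\boldsymbol A^{\star\top} & \boldsymbol A^{\star\top}\boldsymbol A^\star\end{bmatrix},\ \boldsymbol S\Big\rangle,
\]
the right-hand side being the inner maximization of \eqref{obj2} at $\boldsymbol A=\boldsymbol A^\star$. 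By the saddle-point property of the von Neumann identity \eqref{eq3} this supremum is attained at $\boldsymbol S^\star$ with value $\operatorname{Tr}(\boldsymbol S^\star_{\boldsymbol x\boldsymbol x}-\boldsymbol S^\star_{\boldsymbol x\boldsymbol y}(\boldsymbol S^\star_{\boldsymbol y\boldsymbol y})^{-1}\boldsymbol S^\star_{\boldsymbol y\boldsymbol x})$, which equals $J(\mathbb{Q}^\star_\mathcal{N},\psi^\star)$ since $\mathbb{Q}^\star_\mathcal{N}$ has mean $\boldsymbol\mu$ and covariance $\boldsymbol S^\star$; this gives the left inequality.

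The main obstacle is precisely this left inequality, and specifically the fact that holding the estimator fixed at $\psi^\star$ leaves a residual mean-deviation term $\|\boldsymbol G(\boldsymbol c-\boldsymbol\mu)\|_2^2$ that was absent from the reduced objective \eqref{obj2}, so the worst-case distribution is a priori not forced to have mean $\boldsymbol\mu$. The device that removes the difficulty is the re-centering $\tilde{\boldsymbol S}=\boldsymbol S+(\boldsymbol c-\boldsymbol\mu)(\boldsymbol c-\boldsymbol\mu)^\top$: it absorbs the mean deviation into the second-moment matrix and shows that moving the mean away from $\boldsymbol\mu$ only tightens the admissible covariance through the coupled constraints of $\mathcal{P}_\mathrm{M}$, so the adversary gains nothing and the problem collapses to the inner maximization already solved in Theorem~\ref{theorem1}. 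Combining the two inequalities with the value $J(\mathbb{Q}^\star_\mathcal{N},\psi^\star)=\operatorname{Tr}(\boldsymbol S^\star_{\boldsymbol x\boldsymbol x}-\boldsymbol S^\star_{\boldsymbol x\boldsymbol y}(\boldsymbol S^\star_{\boldsymbol y\boldsymbol y})^{-1}\boldsymbol S^\star_{\boldsymbol y\boldsymbol x})$ established by Theorem~\ref{theorem2} then certifies the saddle point.
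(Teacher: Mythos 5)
Your proof is correct, but it takes a genuinely different route from the paper. The paper's own argument is abstract: it invokes the characterization that $(\mathbb{Q}^\star_{\mathcal{N}},\psi^\star)$ is a saddle point precisely when $\sup\inf J=\inf\sup J=J(\mathbb{Q}^\star_{\mathcal{N}},\psi^\star)$, and then reads both equalities off \eqref{eq18} and \eqref{eq3}; it never touches the two defining inequalities individually (and, as stated, its ``if and only if'' suppresses the attainment conditions that make the sufficiency direction go through). You instead verify the two inequalities directly. Your right inequality is exactly the content of \eqref{eq22} (Gaussian MMSE is the affine conditional mean, which coincides with $\psi^\star$ by \eqref{sova}--\eqref{sovb}), so there you and the paper agree. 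Your left inequality is where the approaches diverge: the re-centering $\tilde{\boldsymbol S}=\boldsymbol S+(\boldsymbol c-\boldsymbol\mu)(\boldsymbol c-\boldsymbol\mu)^\top$, the observation that the constraints of $\mathcal{P}_{\mathrm{M}}$ act exactly on the marginal blocks of $\tilde{\boldsymbol S}$, and the resulting reduction of $\sup_{\mathbb{Q}\in\mathcal{P}_{\mathrm{M}}}J(\mathbb{Q},\psi^\star)$ to the inner maximization of \eqref{obj2} at $\boldsymbol A=\boldsymbol A^\star$ constitute an explicit argument for why the adversary gains nothing by moving the mean off $\boldsymbol\mu$ once the estimator is frozen --- a point the paper's proof leaves implicit. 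What your approach buys is a self-contained verification that actually fills this gap; what the paper's buys is brevity, at the cost of leaning on the prior theorems. One small step you should tighten: the claim that $\sup_{\boldsymbol S}\langle \boldsymbol G^\top\boldsymbol G,\boldsymbol S\rangle$ is attained at $\boldsymbol S^\star$ needs the pair $(\boldsymbol A^\star,\boldsymbol S^\star)$ to be a saddle point of the finite-dimensional game, which does not follow from the minimax equality \eqref{eq3} alone together with $\boldsymbol A^\star$ being a best response to $\boldsymbol S^\star$; it does follow once you add that the inner minimization over $\boldsymbol A$ is strictly convex (for $\boldsymbol S^\star_{\boldsymbol y\boldsymbol y}\succ 0$), so the best response to $\boldsymbol S^\star$ is unique and must coincide with a minimax strategy. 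That is a one-line addition, not a flaw in the strategy.
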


\begin{proof}

First, Theorem \ref{theorem2} already implies $\mathbb{Q}^\star_{\mathcal{N}}\in \mathcal{P}_\mathrm{M}$. Then $(\mathbb{Q}^\star_{\mathcal{N}},\psi^\star)\in \mathcal{P}_\mathrm{M}\times \mathcal{L}$ is a saddle point of $J(\mathbb{Q},\psi)$ if and only if
\begin{equation}
	\sup_{\mathbb{Q}\in \mathcal{P}_\mathrm{M}} \inf_{\psi\in \mathcal{L}}  J(\mathbb{Q},\psi)=
	\inf_{\psi\in \mathcal{L}}  \sup_{\mathbb{Q}\in \mathcal{P}_\mathrm{M}} J(\mathbb{Q},\psi)=
	J(\mathbb{Q}^\star_{\mathcal{N}},\psi^\star)
\label{eq24}
\end{equation}

The first equality is already established in \eqref{eq18}. Next, since $\psi^\star$ is an affine function, it follows that 
\begin{equation}
	J(\mathbb{Q}^\star_{\mathcal{N}},\psi^\star) = J(\boldsymbol{c},\boldsymbol{S}^\star;\boldsymbol{A}^\star,\boldsymbol{b}^\star)=
	\sup_{\mathbb{Q}\in \mathcal{P}_\mathrm{M}} \inf_{\psi\in \mathcal{L}}  J(\mathbb{Q},\psi)
\label{eq25}
\end{equation}
where the first equality is due to the definition of $J$ and the second equality is due to \eqref{eq3} and \eqref{eq18}.

\end{proof}

\section{Extension to Dynamic Systems: The MC-MDRKF Algorithm}
\subsection{Signal model}
The state equation and the measurement equation of a linear dynamic system are defined as
\begin{equation}
	\left.\begin{array}{l}
\boldsymbol{x}_{t}=\boldsymbol{F}_{t} \boldsymbol{x}_{t-1}+\boldsymbol{G}_{t} \boldsymbol{w}_{t} \\
\boldsymbol{y}_{t}^i=\boldsymbol{H}_{t}^i \boldsymbol{x}_{t}+\boldsymbol{v}_{t}^i
\end{array}\right\} \quad \forall t \in \mathbb{N},i=1,\dots,p
\label{eq26}
\end{equation}
where $i$ is the sensor index, $t$ is the time index, $\boldsymbol{x}_t \in \mathbb{R}^n$ is the state vector, $\boldsymbol{y}^i_t \in \mathbb{R}^{m_i}$ is the measurement vector of sensor $i$, $\boldsymbol{w}_t \in \mathbb{R}^r$ is the process noise, $\boldsymbol{v}^i_t \in \mathbb{R}^q$ is the measurement noise of sensor $i$, and $\boldsymbol{F}_t \in \mathbb{R}^{n \times n}$, $\boldsymbol{G}_t \in \mathbb{R}^{n \times r}$, $\boldsymbol{H}^i_t \in \mathbb{R}^{m_i \times n}$ are nominal system matrices.

The noise terms and the initial state are assumed Gaussian:
\begin{equation}
	\begin{array}{l}
\mathbb{E}\left\{\boldsymbol{w}_{t}\right\}=0, \mathbb{E}\left\{\boldsymbol{w}_{t} \boldsymbol{w}_{k}^{\top}\right\}=\boldsymbol{Q}_t \delta_{t k} \\
\mathbb{E}\left\{\boldsymbol{v}_{t}^{j}\right\}=0, \mathbb{E}\left\{\boldsymbol{v}_{t}^{j}\left(\boldsymbol{v}_{k}^{j}\right)^{\top}\right\}=\boldsymbol{R}^{j}_t \delta_{t k} \\
\mathbb{E}\left\{\boldsymbol{x}_{0}\right\}=\hat{\boldsymbol{x}}_{0}, \operatorname{Var}(\boldsymbol{x}_0) =\boldsymbol{V}_{0} \\
\left\{\boldsymbol{w}_{t}\right\},\left\{\boldsymbol{v}_{t}^{i}\right\}, \text { and }\left\{\boldsymbol{x}_{0}\right\} \text { are mutually independent }
\end{array}
\label{eq27}
\end{equation}

As an extension of the static case, it is assumed the correlations between sensor noises $\boldsymbol{v}_{t}^{i} (i=1,...,p)$ are unknown. Additionally, the exact values of system parameters such as $\boldsymbol{F}_t$, $\boldsymbol{G}_t$, $\boldsymbol{H}_t^i$, $\boldsymbol{Q}_t$, and $\boldsymbol{R}_t^i$ may be uncertain. Consequently, the true distribution $\mathbb{Q}$ of $\boldsymbol{z}_t=[\boldsymbol{x}_t^{\top},\boldsymbol{y}_t^{\top}]^{\top}$, departing from the nominal distribution $\mathbb{P}$, is unknown, making the estimation problem ill-defined. To address this, the conditional mean $\hat{\boldsymbol{x}}_t$ and covariance matrix $\boldsymbol{V}_t$ of $\boldsymbol{x}_t$ given the observation history $\boldsymbol{Y}_t$ are estimated under a worst-case distribution $\mathbb{Q}$, constructed recursively. 

\subsection{Solution}
The iterative prediction-correction estimation of $\boldsymbol{x}_t(t=1,2, ...) $ is as follows, given the marginal distribution $\mathbb{Q}_{\boldsymbol{x}_0} = \mathcal{N}_n(\hat{\boldsymbol{x}}_0, \boldsymbol{V}_0)$ and the conditional distribution $\mathbb{Q}_{\boldsymbol{x}_{t-1}|\boldsymbol{Y}_{t-1}} = \mathcal{N}_n(\hat{\boldsymbol{x}}_{t-1}, \boldsymbol{V}_{t-1})$.

The prediction step is conducted in the fusion center by combining each sensor's previous state estimate $\mathbb{Q}_{\boldsymbol{x}_{t-1}|\boldsymbol{Y}_{t-1}}$ with its nominal transition kernel $\mathbb{P}_{\boldsymbol{x}_t,\boldsymbol{y}_t^i|\boldsymbol{x}_{t-1}}$ to generate a series of pseudo-nominal distribution $\mathbb{P}_{\boldsymbol{x}_t,\boldsymbol{y}_t^i|\boldsymbol{Y}_{t-1}}$, which is defined as
\begin{equation}
\begin{aligned}
	&\mathbb{P}_{\boldsymbol{x}_t,\boldsymbol{y}_t^i \mid \boldsymbol{Y}_{t-1}}\left(B \mid \boldsymbol{Y}_{t-1}\right)\\
	& =\int_{\mathbb{R}^{n}} \mathbb{P}_{\boldsymbol{x}_t,\boldsymbol{y}_t^i \mid \boldsymbol{x}_{t-1}}\left(B \mid \boldsymbol{x}_{t-1}\right) \mathbb{Q}_{\boldsymbol{x}_{t-1} \mid \boldsymbol{Y}_{t-1}}\left(\mathrm{d} \boldsymbol{x}_{t-1} \mid \boldsymbol{Y}_{t-1}\right)
\end{aligned}
\label{eq29}
\end{equation}
for every Borel set $B \subseteq \mathbb{R}^{n+m}$ and observation history $\boldsymbol{Y}_{t-1} \in \mathbb{R}^{m \times (t-1)}$.
By the formula for the convolution of two multivariate Gaussians, we have 
\begin{equation}
\mathbb{P}_{\boldsymbol{x}_t,\boldsymbol{y}_t^i \mid \boldsymbol{Y}_{t-1}} = \mathcal{N}_{n+m_i}(\boldsymbol{\mu}_t^i,\boldsymbol{\Sigma}_t^i),i=1,\dots,p
\label{eq30_}
\end{equation}
where
\begin{equation}
\boldsymbol{\mu}_{t}^i=\left[\begin{array}{c}
\boldsymbol{\mu}_{x, t} \\
\boldsymbol{\mu}_{y^i, t}
\end{array}\right]=\left[\begin{array}{c}
\boldsymbol{F}_{t-1} \\
\boldsymbol{H}_{t}^i \boldsymbol{F}_{t-1}
\end{array}\right] \hat{\boldsymbol{x}}_{t-1 \mid t-1}
\label{eq30}
\end{equation}
and
\begin{equation}
	\begin{array}{l}
\boldsymbol{\Sigma}_{t}^i=\left[\begin{array}{c}
\boldsymbol{F}_{t-1} \\
\boldsymbol{H}_{t}^i \boldsymbol{F}_{t-1}
\end{array}\right] \boldsymbol{V}_{t-1}\left[\begin{array}{c}
\boldsymbol{F}_{t-1} \\
\boldsymbol{H}_{t}^i \boldsymbol{F}_{t-1}
\end{array}\right]^{T}+ \\
{\left[\begin{array}{cc}
\boldsymbol{G}_{t-1} \boldsymbol{Q}_{t-1}^{\frac{1}{2}} & \mathbf{0} \\
\boldsymbol{H}_{t}^i \boldsymbol{G}_{t-1} \boldsymbol{Q}_{t-1}^{\frac{1}{2}} & (\boldsymbol{R}_{t}^i)^{\frac{1}{2}}
\end{array}\right]\left[\begin{array}{cc}
\boldsymbol{G}_{t-1} \boldsymbol{Q}_{t-1}^{\frac{1}{2}} & \mathbf{0} \\
\boldsymbol{H}_{t}^i \boldsymbol{G}_{t-1} \boldsymbol{Q}_{t-1}^{\frac{1}{2}} & (\boldsymbol{R}_{t}^i)^{\frac{1}{2}}
\end{array}\right]^{T}}
\end{array}
\label{eq31}
\end{equation}

In the update step, the goal is to find a joint a priori distribution $\mathbb{Q}_{\boldsymbol{x}_t,\boldsymbol{y}_t\mid \boldsymbol{Y}_{t-1}}$ that robustified against marginal distributional uncertainty by solving \eqref{original minimax}. A refined a posteriori estimate $\mathbb{Q}_{\boldsymbol{x}_t \mid \boldsymbol{Y}_t}$, which is the solution of $\psi$ in \eqref{original minimax}, is then obtained:
\begin{equation}
\left\{
\begin{aligned}
\hat{\boldsymbol{x}}_{t\mid t} &= \boldsymbol{\mu}_{x,t} + \boldsymbol{S}_{xy,t}^\star (\boldsymbol{S}_{yy,t}^\star)^{-1}(\boldsymbol{y}-\boldsymbol{\mu}_{y,t})  \\
\boldsymbol{V}_{t\mid t} &= \boldsymbol{S}_{xx,t}^\star - \boldsymbol{S}_{xy,t}^\star(\boldsymbol{S}_{yy,t}^\star)^{-1}\boldsymbol{S}_{yx,t}^\star
\end{aligned}
\right.
\label{sov}
\end{equation}

Algorithm \ref{alg1} summarizes the proposed MC-MDRKF. Compared to \cite{niu2023marginal}, our method seeks the minimax solution over a new marginal distributional uncertainty set defined by moment constraints, thus achieving better robustness, as demonstrated in Section V. Note when $\gamma_{2,i},\gamma_{3,i}=0$, the marginal distributional uncertainty declines and the MC-MDRKF yields the optimal estimation as the canonical centralized Kalman filter.

\begin{algorithm}
\caption{Marginal Distributionally Robust Kalman Filter}
\label{alg1}
\small
\begin{algorithmic}
\REQUIRE Estimate at time $t-1$, $\hat{\boldsymbol{x}}_{t-1|t-1}$, and covariance $\boldsymbol{V}_{t-1|t-1}$
\STATE \textbf{Prediction step:}
\FOR{$i = 1$ \TO $p$}
  \STATE Compute $(\boldsymbol{\mu}_{t}^i,\boldsymbol{\Sigma}_{t}^i)$ with \eqref{eq30} and \eqref{eq31}
\ENDFOR
\STATE \textbf{Each node sends its measurement $\boldsymbol{y}_t^i$ to the fusion center}
\STATE \textbf{The fusion center formulates the marginal uncertainty set $\mathcal{P}$ with \eqref{def3}}
\STATE \textbf{Update step:}
\STATE Solve problem \eqref{maxmin-II} and obtain the estimator with \eqref{sov}
\ENSURE Estimate at time $t$, $\hat{\boldsymbol{x}}_{t|t}$, and covariance $\boldsymbol{V}_{t|t}$
\end{algorithmic}
\end{algorithm}

\section{Experiment}
The proposed algorithm is tested on a commonly used multi-sensor target tracking scenario under marginal distributional uncertainty, as in \cite{sun2004multi, yan2013optimal, lin2019globally, tian2016multi, feng2012optimal}. The real system dynamics are described by the following state-space model:
\begin{equation}
	\begin{array}{l}
		\boldsymbol{x}_{t+1} = \left[\begin{array}{ccc}
			1 & T_{s} & T_{s}^{2} / 2 \\
			0 & 1 & T_{s} \\
			0 & 0 & 1
		\end{array}\right] \boldsymbol{x}_{t} + \boldsymbol{G}_{t} w_{t} \\
		y^{i}_{t} = \boldsymbol{H}_{i} \boldsymbol{x}^{i}_{t} + v^{i}_{t}, \quad i=1,2,3 \\
		v^{i}_{t} = \beta_{i} w_{t-1} + \eta^{i}_{t}
	\end{array}
	\label{eq32}
\end{equation}
where $T_s=0.1$ is the sampling period. The state vector $\boldsymbol{x}_t = [s_t; \dot{s}_t; \ddot{s}_t]$ consists of the object's position, velocity, and acceleration at time $tT_s$, respectively. The experiment involves 600 time sampling points and a 1000-run Monte Carlo simulation, with the initial estimate being $\hat{\boldsymbol{x}}_0 = \boldsymbol{0}$ and $\boldsymbol{V}_0 = 100\boldsymbol{I}_3$. The rest of the settings align with \cite{sun2004multi}. 

The nominal system model matches the real one except for $v^{i}_{t} = \eta^{i}_{t}$. Candidate methods include the proposed MC-MDRKF, KF on sensor 1, covariance intersection (CI)\cite{chen2002estimation} and the MDRKF with KL uncertainty set extended from \cite{niu2023marginal}. 


The uncertainty set parameters $\gamma_{i}, c_i$ have both been optimally tuned. Figure \ref{fig2} displays the MSE of $s_t$ versus time.

\begin{figure}[htbp]
    \centering
    \includegraphics[width=0.8\linewidth]{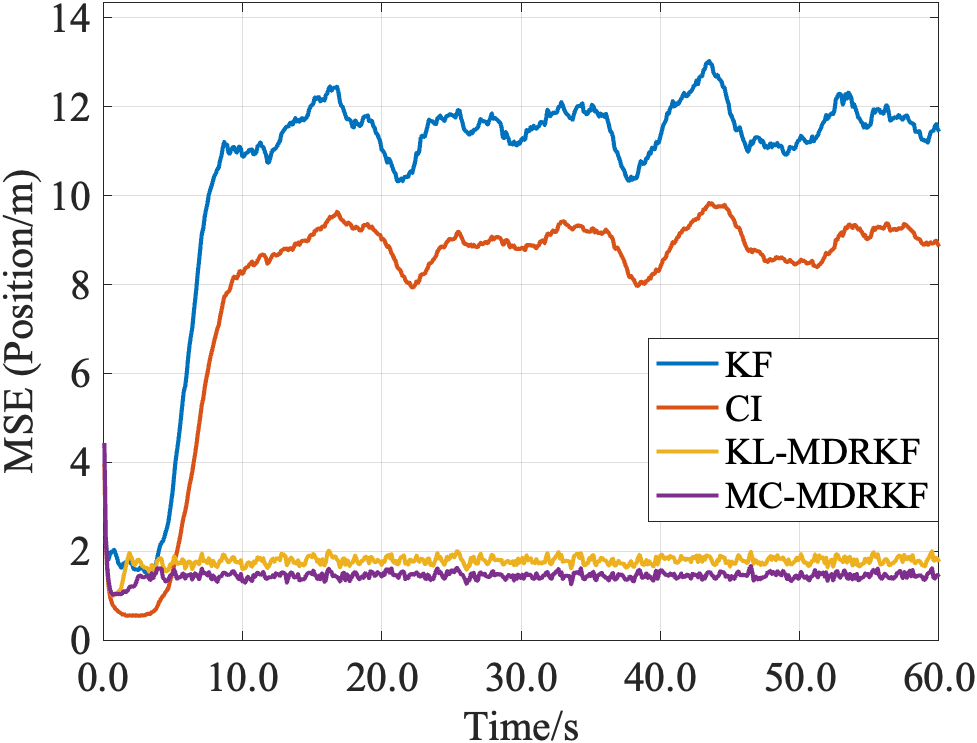}
        \caption{Comparison of MSE of target position estimation}
    \label{fig2}
\end{figure}

\begin{table}[ht]
    \centering
    \begin{threeparttable}
        
        \begin{tabular}{l|cccc}
            \toprule
            MSE & KF & CI & KL-MDRKF & \textbf{MC-MDRKF} \\
            \midrule
            $x$ (m) & 10.5345 & 7.9864 &  1.7764 & \textbf{1.4468} \\
            $v$ (m/s) & 30.2828 & 23.0917 & 8.5224 & \textbf{1.8426} \\
            $a$ (m/s$^2$) & 20.3734 & 16.7782 & 13.5665 & \textbf{2.6792} \\
            \bottomrule
        \end{tabular}
        \caption{MSE of different estimation methods}
        \label{tab:mse_comparison}
    \end{threeparttable}
\end{table}

In this multi-sensor tracking scenario with marginal distributional uncertainty, the MC-MDRKF significantly outperforms the classical CI method, KF, and KL-MDRKF, consistently achieving the lowest MSE across position, velocity, and acceleration estimates. This superior performance is due to its enhanced ability to depict marginal uncertainties and the corresponding minimax optimization. These results highlight the MC-MDRKF’s effectiveness and robustness in dynamic multi-sensor systems, marking a significant advancement in state estimation under uncertainty and providing a solid foundation for future research in robust multi-sensor fusion methods.

%
%

\section{Conclusions}

This paper proposes a novel MC-MDRKF algorithm based on a moment-constrained marginal distributional uncertainty set for robust multi-sensor state estimation in case of unknown noise correlation. By formulating and solving a corresponding minimax optimization problem, the proposed method enhances robustness against marginal uncertainty and achieves significant advantages over traditional KL divergence-based methods. We proved that the problem can be reformulated as a convex optimization problem, making it efficiently solvable. Experimental result validates the superiority of the MC-MDRKF, providing valuable insights and a solid foundation for future research in multi-sensor information fusion and robust state estimation.

%

\bibliographystyle{IEEEtran}
\bibliography{mybibfile}

\end{document}